\documentclass[orivec,runningheads,10pt]{llncs}
\usepackage{tikz}
\usepackage{pgfplots}
\usepackage{amsmath}
\usepackage{amsfonts}
\usepackage{enumerate}
\usepackage{comment}
\usepackage{amssymb}
\usepackage{graphicx}
\usepackage{subfigure}
\usepackage{cite}

\newtheorem{ass}{Assumption}

\newcommand{\bc}{\ensuremath{\mathbf{c}}}
\newcommand{\be}{\ensuremath{\mathbf{e}}}

\newcommand{\bm}{\ensuremath{\mathbf{m}}}
\newcommand{\ba}{\ensuremath{\mathbf{a}}}
\newcommand{\bb}{\ensuremath{\mathbf{b}}}

\newcommand{\bh}{\ensuremath{\mathbf{h}}}
\newcommand{\bs}{\ensuremath{\mathbf{s}}}

\newcommand{\bA}{\ensuremath{\mathbf{A}}}
\newcommand{\bG}{\ensuremath{\mathbf{G}}}

\newcommand{\bK}{\ensuremath{\mathbf{K}}}

\newcommand{\Ftwo}{\ensuremath{\mathbb{F}_2}}
\newcommand{\bH}{\ensuremath{\mathbf{H}}}

\newcommand{\weight}[1]{\ensuremath{\mathrm{w}_{\mathrm{H}}\left(#1\right)}}

\newcommand{\pr}[1]{\ensuremath{\text{\tt Prob}\left(#1\right)}}
\newcommand{\prJ}[1]{\text{\tt Prob}_{\mathcal{P}_n^*}\left(#1\right)}
\newcommand{\card}[1]{\left|#1\right|}
\newcommand{\supp}[1]{\mathrm{S}\left(#1\right)}

\newcommand{\tleft}[1]{\hat{t}^{(#1)}}
\newcommand{\val}{\ensuremath{\mathbf{y}}}

\newcommand{\upc}{\ensuremath{\mathtt{upc}}}

\newcommand{\Pfzero}[1]{\ensuremath{\mathtt{P}_{f \mid 0}(#1)}}
\newcommand{\Pfone}[1]{\ensuremath{\mathrm{P}_{f \mid 1}(#1)}}
\newcommand{\Pmzero}[1]{\ensuremath{\mathrm{P}_{m \mid 0}(#1)}}
\newcommand{\Pmone}[1]{\ensuremath{\mathrm{P}_{m \mid 1}(#1)}}

\newcommand{\pErrOneUnsat}{\ensuremath{\rho_{1,\mathtt{u}}}}

\newcommand{\pErrZeroUnsat}{\ensuremath{\rho_{0,\mathtt{u}}}}

\newcommand{\Pf}[1]{\Pfone{#1}}
\newcommand{\Pu}[1]{\Pmzero{#1}}

\newcommand{\cPu}[1]{\Pfzero{#1}}

\newcommand{\pic}{\pErrOneUnsat}

\newcommand{\pci}{\pErrZeroUnsat}

\usepackage[acronym,shortcuts]{glossaries}
\glsdisablehyper

\newacronym{ML}{ML}{Maximum Likelihood}
\newacronym{SDP}{SDP}{Syndrome Decoding Problem}
\newacronym{ISD}{ISD}{Information Set Decoding}
\newacronym{QC-MDPC}{QC-MDPC}{Quasi-Cyclic Moderate-Density Parity-Check}
\newacronym{QC-LDPC}{QC-LDPC}{Quasi-Cyclic Low-Density Parity-Check}
\newacronym{LDPC}{LDPC}{Low-Density Parity-Check}
\newacronym{MDPC}{MDPC}{Moderate-Density Parity-Check}
\newacronym{QC}{QC}{Quasi-Cyclic}
\newacronym{DFR}{DFR}{Decoding Failure Rate}
\newacronym{BF}{BF}{Bit Flipping}
\newacronym{IND-CCA2}{IND-CCA2}{indistinguishability under adaptive chosen ciphertext attack}
\newacronym{RIP-BF}{RIP-BF}{Randomized In-Place Bit-Flipping}

\usepackage[noline,ruled,noend,linesnumbered]{algorithm2e}

\usepackage{breqn}
\usepackage{bbm}
\usepackage{latexsym}

\usepackage{tikz}
\usetikzlibrary{calc, chains,
                fit,
                positioning,
                shapes}

\usetikzlibrary{automata,arrows,positioning,calc}
\definecolor{myblue}{RGB}{0,0,205}
\definecolor{mygreen}{RGB}{80,160,80}
\definecolor{myred}{RGB}{178,34,34}
\definecolor{mygray}{RGB}{211,211,211}

\pgfplotsset{compat=1.14}
 \begin{document}
\title{A Code-specific Conservative Model for the Failure Rate of Bit-flipping 
Decoding of LDPC Codes with Cryptographic Applications}
\titlerunning{On the Failure Rate of Bit-flipping Decoding of LDPC Codes}
\author{Paolo Santini\inst{1,2}\and
Alessandro Barenghi \inst{3}\and
Gerardo Pelosi \inst{3} \and Marco Baldi\inst{1} \and Franco Chiaraluce\inst{1}}
\authorrunning{P. Santini et al.}
\institute{Universit\`a Politecnica delle Marche \and
Florida Atlantic University \and Politecnico di Milano\\
\email{p.santini@pm.univpm.it, \{alessandro.barenghi, gerardo.pelosi\}@polimi.it, 
\{m.baldi, f.chiaraluce\}@univpm.it}}
\maketitle
\begin{abstract}
Characterizing the decoding failure rate of iteratively decoded Low- and 
Moderate-Density Parity Check (LDPC/MDPC) codes is paramount to build 
cryptosystems based on them, able to achieve indistinguishability under adaptive 
chosen ciphertext attacks.
In this paper, we provide a statistical worst-case analysis of our proposed 
iterative decoder obtained through a simple modification of the classic in-place 
bit-flipping decoder.
This worst case analysis allows both to derive the worst-case behaviour 
of an LDPC/MDPC code picked among the family with the same length, rate and
number of parity checks, and a code-specific bound on the decoding failure rate.
The former result allows us to build a code-based cryptosystem enjoing the 
$\delta$-correctness property required by IND-CCA2 constructions, while
the latter result allows us to discard code instances which may have a 
decoding failure rate significantly different from the average one (i.e., 
representing weak keys), should they be picked during the key generation procedure.
\keywords{Bit-flipping decoding, cryptography, decoding failure rate, 
LDPC codes, MDPC codes, weak keys.}
\end{abstract}
\section{Introduction}
\label{sec:intro}
Code based cryptosystems, pioneered by McEliece~\cite{McE78}, are among the oldest 
public-key cryptosystems, and have survived a significant amount of cryptanalysis, 
remaining unbroken even for quantum-equipped adversaries~\cite{Ber10}.
This still holds true for both the original McEliece construction, and the one by
Niederreiter~\cite{nied}, when both instantiated with Goppa codes, as they both rely
on the same mathematical trapdoor, i.e., having the adversary solve the search 
version of the decoding problem for a general linear code, which was proven to 
be NP-Hard in~\cite{DBLP:journals/tit/BerlekampMT78}.

The public-key of such schemes corresponds to an obfuscated representation 
of the underlying error correcting code (either the generator matrix for McEliece, or the 
parity-check matrix for Niederreiter), equipped with a decoding technique that 
can efficiently correct a non-trivial amount of errors.
Since the obfuscated form of either the generator or the parity-check matrix
should be indistinguishable from the one of a random code with the same length and dimension, 
both the original McEliece and Niederreiter proposals have public-key sizes which
grow essentially quadratically in the error correction capacity of the code, 
on which the provided security level itself depends.

The large public-key size required in these cryptosystems has hindered 
their practical application in many scenarios. A concrete way of solving 
this problem is to employ codes described by matrices with a \ac{QC} structure, 
which result in public-key sizes growing linearly in the code length.
However, employing \ac{QC} algebraic codes has proven to be a security issue,
as the additional structure given by the quasi-cyclicity allows an attacker
to deduce the underlying structure of the secret code~\cite{faugere}.
By contrast, code families obtained from a random sparse parity-check matrix 
do not suffer from the same problem, and have lead to the successful 
proposal of \ac{QC-LDPC} codes or \ac{QC-MDPC} 
codes~\cite{baldi, MTSB13} as code families to build a secure and efficient 
instance of either the McEliece or the Niederreiter cryptosystem.

However, the efficient iterative algorithms used for decoding \ac{LDPC} and 
\ac{MDPC} codes are not bounded distance decoders, yielding a 
non-zero probability of obtaining a decoding failure, known as \ac{DFR}, 
which translates into a decryption failure rate for the corresponding code-based 
cryptosystems.
The presence of a non-null \ac{DFR} was shown to be exploitable by an active
adversary, which has access to a decryption oracle (the typical scenario
of a Chosen Ciphertext Attack (CCA)), to extract information on the secret
\ac{QC-LDPC} or \ac{QC-MDPC} code~\cite{Guo2016,Fabsic2017}.
To reliably avoid such attacks, cryptosystem constructions providing \ac{IND-CCA2}
guarantees, even when considering decoding failures, were analyzed in~\cite{HHK,Bindel2019}.
In order for the \ac{IND-CCA2} security guarantees to hold, the constructions
require that the average of the \ac{DFR} over all the keypairs, which an 
adversary is able to induce crafting messages, is below a given threshold $\delta$; 
a definition known as \emph{$\delta$-correctness}~\cite{HHK}.
Such a threshold $\delta$ must be exponentially small in the security parameter of the
scheme, in turn calling for requirements on the \ac{DFR} of the underlying 
code that cannot be estimated via numerical simulations (e.g., \ac{DFR}$\leq 2^{-128}$).

The impossibility of validating the \ac{DFR} through numerical simulations has
spurred significant efforts in modelling the behaviour of iterative decoders for 
\ac{QC-LDPC} and \ac{QC-MDPC} codes, with the goal of finding reliable tools to 
assess the DFR~\cite{Tillich2018, Baldi2019_CBC, Santini2019_ICC, Sendrier2019, 
Santini2019_Arxiv}.
A subset of the aforementioned works consider a very small number of iterations of the
decoder, providing code-specific exact bounds for the \ac{DFR}~\cite{Tillich2018, 
Santini2019_ICC, Santini2019_Arxiv}; however, employing such bounds to perform 
the code-parameter design results in impractically large public-key sizes.
In~\cite{Sendrier2019}, the authors adopt a completely different approach which 
extrapolates the \ac{DFR} in the desired regime from numerical simulations performed with higher \ac{DFR} values. This method assumes that the exponentially decreasing trend of the \ac{DFR} holds as the code length is increased while keeping the rate constant.
Such an assumption, however, does not rest on a theoretical basis.
Finally, in~\cite{Baldi2019_CBC} the authors characterize the \ac{DFR} of a 
two-iteration out-of-place decoder, providing a closed-form method to derive
an estimate of the average \ac{DFR} over all the \ac{QC-LDPC} codes with the same
length, rate and density, under the assumption that the bit-flipping decisions
taken during the first iteration are independent from each other.
In the recent work \cite{Drucker2019}, authors have highlighted an issue concerning possible \emph{weak keys} of \ac{QC-LDPC} and \ac{QC-MDPC} code-based cryptosystems, 
i.e., keypairs obtained from codes having a \ac{DFR} significantly lower than the average one.

\smallskip 
\noindent \textbf{Contributions.} We provide an analysis of the \ac{DFR}
of an in-place iterative \ac{BF}-decoder for \ac{QC-LDPC} and \ac{QC-MDPC}
codes acting on the estimated error locations in a randomized fashion 
for a fixed number of iterations.
We provide a closed form statistical model for such a decoder, allowing us
to derive a worst-case behaviour at each iteration, under clearly stated assumptions.
We provide both an analysis of the \ac{DFR} of the said decoder in the worst case scenario for the average \ac{QC-LDPC}/\ac{QC-MDPC} code, and we exploit the approach 
of~\cite{Santini2019_Arxiv} to derive a hard bound on the performance of the decoder 
on a given \ac{QC-LDPC}/\ac{QC-MDPC} code.
While our analysis on the behavior of a \ac{QC-LDPC}/\ac{QC-MDPC} code allows us
to match the requirements for a $\delta$-correct cryptosystem~\cite{HHK}, 
the hard bound we provide for the behavior of the decoder on a specific code 
allows us to discard \emph{weak keys} during the key generation phase, solving any concern about the use of weak keys.
We provide a confirmation of the effectiveness of our analysis by comparing its results with numerical simulations of the described in-place decoder.
 \section{Preliminaries}
\label{sec:preliminaries}
Throughout the paper, we will use uppercase (resp. lowercase) bold letters to 
denote matrices (resp. vectors).
Given a matrix $\bA$, its $i$-th row and $j$-th column are denoted as 
$\bA_{i,:}$  and $\bA_{:,j}$, respectively, while the entry on the 
$i$-th row, $j$-th column is denoted as $a_{i,j}$. 
Given a vector $\ba$, its length is denoted as $|\ba|$, while the $i$-th element 
is denoted as $a_i$, with $0\leq i \leq |a|-1$; finally, 
the support (i.e., the set of positions of the asserted elements in a sequence) 
and the Hamming weight of $\ba$ are denoted as $\supp{\ba}$ and 
$\weight{\ba}$, respectively.
We will use $\mathcal{P}_n$, $n \geq 1$, to denote the set of $n!$ permutations 
of $n$ elements, represented as a set of integers from $0$ to $n$$-$$1$, 
while the notation $\pi\xleftarrow{\$} \mathcal{P}_n$ is employed to randomly and 
uniformly pick an element in $\mathcal{P}_n$, denoting the picked permutation 
of integers in $\{0\ldots,n-1\}$ as $\pi$. 

As far as the cryptoschemes are concerned, in the following we will make use of 
a \ac{QC-LDPC}/\ac{QC-MDPC} code $\mathcal{C}$, with length 
$n=n_0p$, dimension $k=(n_0-1)p$ and redundancy $r=n-k=p$.
The private-key will coincide with the parity-check matrix
$\bH = \left[\bH_0,\bH_1,\cdots,\bH_{n_0-1}\right] \in \mathbb{F}_2^{r \times n}$,
where each $\bH_i$, $0\leq i\leq n_0-1$ is a binary circulant matrix of size 
$p\times p$ and fixed Hamming weight $v$ of each column/row.
Therefore, $\bH$ has constant column-weight $v$ and constant row-weight $w=n_0v$, 
and we say that $\bH$ is $(v,w)$-\textit{regular}.

When considering the case of the McEliece construction, the public-key may be chosen as the systematic generator matrix of the code.
The plaintext is in the form $\bc = \bm\bG+\be,\ \bc \in \mathbb{F}_2^{1\times n}$, where $\bm \in \mathbb{F}_2^{1\times n}$, $\be \in \mathbb{F}_2^{1\times n}$ and $\weight{\be}=t$. 
The decryption algorithm takes as input the ciphertext $\bc$ to compute the syndrome 
$\bs = \bc\bH^\top=\be\bH^\top, \ \bs \in \mathbb{F}_2^{1\times r}$, and the private-key $\bH$ to fed a syndrome decoding algorithm with both $\bs$ and $\bH$ and derive $\be$, from which the original message is recovered looking at the first $k$ elements of $\bc-\be$.

When the Niederreiter construction is considered, the public-key is defined as 
the systematic parity-check matrix of the code, obtained from the private-key as  
$\mathbf{M} = \mathbf{H}_0^{-1}\bH \in \mathbb{F}_2^{r \times n}$.
In this case, the message to be encrypted coincides with the error vector $\be \in \mathbb{F}_2^{1\times n}$, $\weight{\be}=t$, while 
the encryption algorithm computes the ciphertext  $\bc = \be\mathbf{M}^\top, \ \bc \in \mathbb{F}_2^{1\times r}$ as a syndrome.
The decryption algorithm takes as input the ciphertext $\bc$ and the private-key $\bH$ to compute a private-syndrome 
$\bs = \bc\bH_0^\top = \be\mathbf{M}^\top \bH_0^\top = \be\mathbf{H}^\top(\bH_0^\top)^{-1} \bH_0^\top = \be\bH^\top$ and, subsequently,  
fed with it a syndrome decoding algorithm to derive the original message $\be$.

 \section{Randomized In-place Bit-flipping Decoder}
\label{sec:rip}
In this section we describe a slightly modified version of the \ac{BF} decoder originally proposed by Gallager in 1963~\cite{Gal63}.
We focus on the \emph{in-place} \ac{BF}-decoder in which, at each bit evaluation, the decoder computes the number 
of unsatisfied parity-check equations in which the bit participates: when this number exceeds some threshold 
(which may be chosen according to different rules), then the bit is flipped and the syndrome is updated. 
Decoding proceeds until a null syndrome is obtained or a prefixed maximum number of iterations is reached. 

The algorithm we analyze is reported in Algorithm \ref{alg:RIP_BF}.
Inputs of the decoder are the binary parity-check matrix $\bH$, the syndrome $\bs$, the maximum 
number of iterations $\mathtt{itermax}$ and a vector $\bb$ of length $\mathtt{itermax}$, such that the $i$-th iteration uses $b_i$ as threshold.
The only difference with the classic in-place BF decoder is that the estimates on the error vector bits are processed in a random order, driven by a random permutation (generated at line $3$).
For this reason, we call this decoder \ac{RIP-BF} decoder.
Such a randomization, which is common to prevent side-channel 
analysis~\cite{DBLP:journals/ipl/AgostaBPS15,DBLP:books/daglib/0017272} (and 
typically goes by the name of instruction shuffling in that context), is 
crucial in our analysis, since it allows us to derive a worst case analysis, as 
we describe in the following section.

{\small
\begin{algorithm}[!t]
\LinesNumbered
\DontPrintSemicolon
\caption{Randomized In-Place BF decoder~\label{alg:RIP_BF}}
\KwIn{$\bs\in \Ftwo^{r}$: syndrome \newline
      $\bH\in \Ftwo^{r \times n}$: private parity-check matrix}
\KwOut{$\hat{\be}\in\Ftwo^n$: recovered error value\newline
       $\bs\in\Ftwo^r$: syndrome, null if error $\hat{\be}=\be$ }
\KwData{$\mathtt{itermax} \geq 1$: maximum number of (outer loop) iterations\newline
       $\bb=[b_1, \ldots, b_{\mathtt{itermax}}], b_k \in \{\lceil \frac{v}{2}\rceil, \ldots, v\},\, 1\leq k \leq \mathtt{itermax}$: flip thresholds}
\BlankLine        
$\mathtt{iter}\gets 0,\ \hat{\be}\gets \mathbf{0}_n$\;

\While{$(\mathtt{iter}<\mathtt{itermax})\wedge(\weight{\bs}>0)$}{
    \BlankLine 
    $\pi\xleftarrow{\$} \mathcal{P}_n$ \tcp*{random permutation of size $n$}
    \ForEach{$\hat{e}_j \in \pi(\hat{\be})$}{$\mathtt{upc}\gets 0$\;
        \For{$i\gets 0\ \mathbf{to}\ r-1$}{$\mathtt{upc}\gets\mathtt{upc}+(s_i \cdot h_{i,j})$\;
        }
        \If{$\mathtt{upc}\geq b_{\mathtt{iter}}$}{$\hat{e}_j\gets \hat{e}_j\oplus 1$ \tcp*{estimated error vector update}
            \For{$i\gets 0\ \mathbf{to}\ r-1$}{$s_i \gets s_i \oplus h_{i,j}$\;
            }
        }
    }
    $\mathtt{iter}\gets \mathtt{iter}+1$  \tcp*{update of the iterations counter}
}
\BlankLine 
\KwRet $\{\bs, \hat{\be}\}$ \;
\end{algorithm}
}

\subsection{Assessing Bit-flipping Probabilities}\label{sec:statistical_model}
In this section we describe a statistical approach to model the behaviour of the \ac{RIP-BF} decoder.
We assume that the bit evaluations are independent and uncorrelated, and depend 
only on the number of the bits of $\hat{\be}$ which do not match the ones of $\be$ 
at the beginning of the outer loop iterations.
Such an assumption is captured by the following statement.

Consider the execution of steps in Algorithm~\ref{alg:RIP_BF} from the beginning of an outer loop iteration (line 3). 
For each position $j$, with $0\leq j\leq n-1$, of the unknown error vector, \be, (or equivalently, for each column of the matrix \bH) if the number of the unsatisfied parity-checks (upc) influenced by $e_j$ exceeds the predefined threshold chosen for the current (outer loop) iteration, $\mathtt{b}_{\mathtt{iter}}$, then the $j$-th position  of the estimated error vector, $\hat{e}_j$, is flipped and the value of the syndrome is updated (lines 6-9).
Denoting as
\begin{itemize}
\item $\mathrm{P}_{f\mid1}=\mathtt{Prob}\left((j\text{-{\em th}}\ \mathtt{upc}) \geq b_{\mathtt{iter}}\mid e_j = 1 \right)$, the probability that 
the computation of the $j$-th upc yields an outcome greater or equal to the current threshold (thus, triggering a flip of $\hat{e}_j$) conditioned by the hyphotetical event of knowing that the actual $j$-th error bit is asserted, i.e., $e_j = 1$;  
\item $\mathrm{P}_{m\mid0}=\mathtt{Prob}\left((j\text{-{\em th}}\ \mathtt{upc}) < b_{\mathtt{iter}} \mid e_j = 0 \right)$, the probability that 
the computation of the $j$-th upc yields an outcome less than the current threshold (thus, maintaining the bit $\hat{e}_j$ unchanged) conditioned by the hyphotetical event of knowing that the actual $j$-th error bit is null, i.e., $e_j = 0$.
\end{itemize}
\noindent In the following analyses, the statement below is assumed to hold. 
\begin{ass}\label{ass:Pf_Pu} 
Both $\mathrm{P}_{f\mid1}$ and $\mathrm{P}_{m\mid0}$ are not a function of the bit-position in the actual error vector (i.e., $j$, in the previous formulae), although both probabilities are a function of the total number $\hat{t}=\weight{\be\oplus\hat{\be}}$ of positions over which the unknown error \be\ and the estimated error vector $\hat{\be}$ differ, at the beginning of the $j$-th inner loop iteration (line 5 in Algorithm~\ref{alg:RIP_BF}). 
\end{ass}
To derive closed formulae for both $\mathrm{P}_{f\mid1}$ and $\mathrm{P}_{m\mid0}$, we focus on \ac{QC-LDPC}/\ac{QC-MDPC} parity-check matrices as described in Section~\ref{sec:preliminaries} with column weight $v$ and row weight $w=n_0v$ and observe that Algorithm~\ref{alg:RIP_BF} uses the columns of the parity-check matrix, for each outer loop iteration, in an order that is chosen with a uniformly random draw (line 3), while the computation performed at lines 6--7 is independent by the processing order of each cell of the selected column. 
According to this, in the following we ``idelize'' the structure of the parity check-matrix, assuming each row of \bH\ independent from the others and modeled as a sample of a uniform random variable, distributed over all possible sequences of $n$ bits with weight $w$. More formally, 
\begin{ass}\label{ass:row_probability}
Let $\bH$ be a $r \times n$ quasi-cyclic block-circulant $(v,w)$-regular parity-check matrix and let $\bs$ be the $1 \times r$ syndrome corresponding to a $1\times n$ error vector $\be$ that is modeled as a sample from a uniform random variable distributed over the elements in $\mathbb{F}_2^{1\times n}$ with weight $t$.\\
We  assume that each row $\bh_{i, :}$, $0\leq i\leq r-1$, of the parity-check matrix \bH\ is well modeled as a sample from a uniform random variable distributed over the elements of $\mathbb{F}_2^{1\times n}$ with weight $w$.
\end{ass}
\begin{lemma}\label{lem:pr_flip_unflip} 
From Assumption~\ref{ass:row_probability}, the probabilities that the $i$-th bit of the syndrome $(0\leq i\leq r-1)$ is asserted knowing that the $z$-th bit of the error vector $(0\leq z\leq n-1)$ is null or not, i.e., $\pr{s_i = 1 | e_z} = \pr{\langle \bh_{i,:},\be \rangle = 1 | e_z}$, $\langle \bh_{i,:},\be \rangle  = \bigoplus_{j=0}^{n-1} h_{i,j} \cdot e_j$, can be expressed for each bit position $z$, $0\leq z \leq n-1$, of the error vector as follows:
$$
\pErrZeroUnsat=\pr{\langle \bh_{i,:},\be \rangle = 1\ |\ e_z=0} = \frac{\sum_{l=0,\text{ l odd}}^{\min\{w,t\}}\binom{w}{l}\binom{n-w}{t-l}}{\binom{n-1}{t}}
$$
$$
\pErrOneUnsat=\pr{\langle \bh_{i,:},\be \rangle = 1\ |\ e_z=1} = \frac{\sum_{l=0,\text{ l even}}^{\min\{w-1,t-1\}}\binom{w-1}{l}\binom{n-w}{t-1-l}}{\binom{n-1}{t-1}}
$$
Consequentially, the probability that Algorithm~\ref{alg:RIP_BF} performs a bit-flip of an element of the estimated error vector, $\hat{e}_z$, when the corresponding bit of the actual error vector is asserted, $e_z=1$, i.e., $\mathrm{P}_{f\mid1}$, and the  probability that Algorithm~\ref{alg:RIP_BF} maintains the value of the estimated error vector, $\hat{e}_z$, when the corresponding bit of the actual error vector is null, $e_z=0$, i.e., $\mathrm{P}_{m\mid0}$, are:
$$\mathrm{P}_{f\mid1}= \sum_{\upc = b}^{v}\binom{v}{\upc}\pic^{\upc}(1-\pic)^{v-\upc},$$
$$\mathrm{P}_{m\mid0} = \sum_{\upc = 0}^{b-1}\binom{v}{\upc}\pci^{\upc}(1-\pci)^{v-\upc}.$$
\end{lemma}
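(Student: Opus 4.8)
The plan is to prove the statement in two stages. First I would establish the single-parity-check probabilities $\pErrZeroUnsat$ and $\pErrOneUnsat$ by a purely combinatorial (hypergeometric) count on the supports of a row of $\bH$ and of $\be$, invoking Assumption~\ref{ass:row_probability} only to guarantee uniformity. Then I would obtain $\mathrm{P}_{f\mid1}$ and $\mathrm{P}_{m\mid0}$ by recognizing the unsatisfied-parity-check counter formed for a fixed coordinate as a binomial random variable and reading off its tails.

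For the first stage, fix a coordinate $z$ and consider a parity-check equation touching $e_z$, i.e.\ a row $\bh_{i,:}$ with $h_{i,z}=1$ (these are exactly the rows whose syndrome bit is accumulated into the counter for $z$ at lines~5--7 of Algorithm~\ref{alg:RIP_BF}). Writing $s_i=\langle\bh_{i,:},\be\rangle=h_{i,z}e_z\oplus\bigoplus_{j\neq z}h_{i,j}e_j$, the leading term becomes a known constant once we condition on $e_z$ — namely $0$ if $e_z=0$ and $1$ if $e_z=1$ — so ``$s_i=1$'' is equivalent to the overlap of $\supp{\bh_{i,:}}\setminus\{z\}$ with $\supp{\be}\setminus\{z\}$ having odd size when $e_z=0$ and even size when $e_z=1$. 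By Assumption~\ref{ass:row_probability} the relevant part of the row is a uniformly random subset of prescribed size among the $n-1$ coordinates other than $z$, hence the number of configurations giving a fixed overlap $l$ is a product of two binomial coefficients by the standard count of intersections of two fixed-size sets; summing over the $l$ of the required parity and dividing by the appropriate total, $\binom{n-1}{t}$ when $e_z=0$ and $\binom{n-1}{t-1}$ when $e_z=1$, yields the two displayed closed forms, the summation bounds being the ranges on which both binomials are nonzero. (Keeping $\be$ fixed and randomizing the row instead gives the same answer after applying $\binom{t}{l}\binom{n-t}{w-l}/\binom{n}{w}=\binom{w}{l}\binom{n-w}{t-l}/\binom{n}{t}$.)

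For the second stage, observe that the value $\upc$ computed for coordinate $z$ is $\sum_{i:\,h_{i,z}=1}s_i$, a sum of exactly $v$ syndrome bits because $\bH$ has constant column weight $v$. By Assumption~\ref{ass:row_probability} (the rows are mutually independent) together with Assumption~\ref{ass:Pf_Pu} (the per-check probability is independent of which coordinate is being tested), conditioning on $e_z$ turns these $v$ indicators into i.i.d.\ Bernoulli variables with parameter $\pErrOneUnsat$ if $e_z=1$ and $\pErrZeroUnsat$ if $e_z=0$; hence the conditional law of $\upc$ is $\mathrm{Binomial}(v,\pErrOneUnsat)$ given $e_z=1$ and $\mathrm{Binomial}(v,\pErrZeroUnsat)$ given $e_z=0$. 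Since Algorithm~\ref{alg:RIP_BF} flips $\hat e_z$ precisely when $\upc\geq b$, the quantity $\mathrm{P}_{f\mid1}=\pr{\upc\geq b\mid e_z=1}$ is the upper binomial tail $\sum_{\upc=b}^{v}\binom{v}{\upc}\pic^{\upc}(1-\pic)^{v-\upc}$, while $\mathrm{P}_{m\mid0}=\pr{\upc<b\mid e_z=0}=\pr{\upc\leq b-1\mid e_z=0}$ is the lower tail $\sum_{\upc=0}^{b-1}\binom{v}{\upc}\pci^{\upc}(1-\pci)^{v-\upc}$, which are exactly the two displayed expressions.

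The binomial-coefficient bookkeeping is routine; the genuine content, and the step to watch, is the independence invoked in the second stage. It does not hold for a true block-circulant $\bH$ and is legitimate here only via the idealization of Assumption~\ref{ass:row_probability}; one must moreover make sure that the conditioning used in the first stage — pinning the tested coordinate $z$ (on which $h_{i,z}$, and in one case $e_z$, are fixed) and discarding it — is precisely the conditioning under which the $v$ checks feeding a single $\upc$ are identically distributed, so that the single-check probabilities from the first stage are the correct Bernoulli parameters in the second.
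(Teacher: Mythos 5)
Your proposal is correct and follows essentially the same route as the paper's proof in Appendix~C: reduce $s_i=1$ conditioned on $e_z$ to a parity condition on the overlap of the two supports, count the configurations hypergeometrically, and then invoke Assumption~\ref{ass:row_probability} to treat the $v$ checks feeding one counter as i.i.d.\ Bernoulli trials whose binomial tails give $\mathrm{P}_{f\mid1}$ and $\mathrm{P}_{m\mid0}$. The only cosmetic difference is that the paper fixes the row and counts over error vectors while you primarily randomize the row, but you note the equivalence yourself, and your closing caveat about where the independence idealization enters matches the paper's reliance on Assumption~\ref{ass:row_probability}.
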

\begin{proof} Provided in Appendix~\ref{sec:appendix_d}.
\end{proof}

\subsection{Bounding Bit-flipping Probabilities for a Given Code}

Given a QC-LDPC code $\mathcal{C}$ with its $r\times n$ $(v,w)$-regular 
parity-check matrix $\bH$, let us consider each column of $\bH$, $\bh_{:, z}$, $0\leq z \leq n-1$, as a Boolean vector equipped with element-wise addition and multiplication denoted as  $\oplus$ and $\wedge$, respectively.  
Let $\mathbf{\Gamma}$ be 
the $n\times n$ integer matrix, where each element $\gamma_{x,y}\in \{0,\ldots,v\}$, with $0 \leq x,y \leq n-1$, is computed as the weight of the element-wise multiplication between two different columns, and 0 otherwise, i.e., 
$$\gamma_{x,y} = 
\begin{cases}
\weight{\bh_{:,x} \wedge \bh_{:,y}} & x \neq y \\
                                  0 & x = y    \\
\end{cases} 
$$
The integer matrix $\mathbf{\Gamma}$ is symmetric and, when derived from a block-circulant matrix, is made of circulant blocks, as well.

An alternate way of exhibiting the probability $\mathrm{P}_{f\mid1}$ that Algorithm~\ref{alg:RIP_BF} performs a bit-flip of an element of the estimated error vector, $\hat{e}_z$, when the corresponding bit of the actual error vector is asserted, i.e., $e_z=1$, consists in counting how many of the $\binom{n-1}{t-1}$ error vectors $\be$, with $e_z=1$, are such that the $z$-th upc counter computed employing the corresponding syndrome (see lines 6--7) is above the pre-defined threshold $b$:
\begin{equation}\label{eq:errorsets}
\mathrm{P}_{f\mid1} = \frac{|\,\{\be\ \text{s.t.}\ (z\text{-th}\ \mathtt{upc}) \geq b\}\,|}{\binom{n-1}{t-1}}.  
\end{equation}
Noting that the computation of $z$-th $\mathtt{upc}$ can be derived as a function of the unknown error vector $\be$ as follows:
$$ z\text{-th}\ \mathtt{upc} = v - \weight{ \bigoplus_{j\in\{\supp{\be}\setminus \{z\}\}} (\bh_{:,z} \wedge \bh_{:,j})}
 \geq v - \sum_{j\in\{\supp{\be}\setminus \{z\}\}} \gamma_{z,j}, 
$$
\noindent the following inequality concerning the numerator of the fraction in Eq.~\eqref{eq:errorsets} holds:
$$
|\,\left\{\be\ \text{s.t.}\ (z\text{-th}\ \mathtt{upc}) \geq b\right\}\,|
\ \geq \
\left|\,\left\{\be\ \text{s.t.}\ \left(v - \sum_{j\in\{\supp{\be}\setminus \{z\}\}}\gamma_{z,j}\right) \geq b \right\}\,\right|
$$
The cardinality of the set on the right-hand side of the above inequality asks for the counting 
of all error vectors such that the sum of the elements on the $z$-th row of the matrix $\mathbf{\Gamma}$ indexed by the positions in $\{\supp{\be}\setminus \{z\}\}$ (with $|\{\supp{\be}\setminus \{z\}\}| = t-1$) is less than $v-b$: i.e., $\sum_{j\in\{\supp{\be}\setminus \{z\}\}}\gamma_{z,j} \leq v-b$.
The answer to the said question is equivalent to counting the number of solutions of the corresponding {\em subset sum} problem~\cite{Cormen2009}, that is finding a subset of  $|\{\supp{\be}\setminus \{z\}\}| = t-1$ elements out of the ones in the row $\gamma_{z,:}$ adding up to at most $v-b$. A straightforward computation of such a counting is unfeasible for cryptographic relevant values of the involved parameters, exhibiting an exponential complexity in the correction capacity of the code $t$. 

However, observing that, for QC-LDPC codes, the number $\eta_z$ of \underline{unique} values on each row $\gamma_{z,:}$ of the matrix $\mathbf{\Gamma}$ is far lower than $t$, therefore we designed an algorithm computing the same result with a complexity exponential in $\eta_z$, reported in Appendix~\ref{sec:appendix_b}.
In the following, for the sake of conciseness, the outcome of the said algorithm fed with a row of the matrix $\mathbf{\Gamma}$, the cardinality $|\{\supp{\be}\setminus \{z\}\}| = t-1$ (i.e., the number of terms of the summation), and the $\mathtt{threshold}$ value that the sum must honor is denoted as: $\mathcal{N}(\gamma_{z,:}, t-1, \mathtt{threshold})$.
\begin{equation}\label{eq:lowerbound1}
\mathrm{P}_{f\mid1}\ \geq\  \frac{\displaystyle \max_{0\leq z \leq n-1}\left\{\mathcal{N}(\gamma_{z,:}, t-1, v-b)\right\}}{\binom{n-1}{t-1}}.
\end{equation}
With similar arguments, a lower bound on $\mathrm{P}_{m\mid0}$ can be derived, obtaining: 
\begin{equation}\label{eq:lowerbound2}
\mathrm{P}_{m\mid0}\ \geq\ \frac{\displaystyle \max_{0\leq z \leq n-1}\left\{\mathcal{N}(\gamma_{z,:}, t, b-1)\right\}}{\binom{n-1}{t}}.
\end{equation}
 \section{Modeling the DFR of the RIP-decoder}
\label{sec:evaluations}
Using the probabilities, $\mathrm{P}_{f\mid1},\mathrm{P}_{m\mid0}$, that we have
derived in the previous section, under Assumption~\ref{ass:Pf_Pu} we can 
derive a statistical model for the RIP-BF decoder.
To this end, we now focus on a single iteration of the outer loop of 
Algorithm~\ref{alg:RIP_BF}.
In particular, as we describe next, we consider a \emph{worst-case} evolution 
for the decoder, by assuming that, at each iteration of the inner loop, 
it evolves through a path that ends in the a decoding success with the lowest 
probability.
We obtain a decoding success if the decoder terminates the inner loop iteration 
in the state where the estimate of the error $\hat{\be}$ matches the actual 
error $\be$. Indeed, in such a case, we have $\weight{\be\oplus \hat{\be}}=0$.

Let $\bar{\be}$ be the error estimate at the beginning of the outer loop of 
Algorithm~\ref{alg:RIP_BF} (line 3), and $\hat{\be}$ be the error estimate at the 
beginning of the inner loop of the same algorithm (line 5).
In other words, $\bar{\be}$ is a snapshot of the error estimate made by the RIP
decoder before a sweep of $n$ estimated error bit evaluations is made, while
$\hat{\be}$ is the value of the estimated error vector before each estimated 
error bit is evaluated.

Let $\hat{t}$ denote the number of residual erroneous bit estimations at the 
beginning of the inner loop iteration, that is $\hat{t}=\weight{\be \oplus \hat{\be}}$.
From now on, we highlight the dependency of $\mathrm{P}_{f\mid1}$ and 
$\mathrm{P}_{m\mid0}$ from the current value of $\hat{t}$, writing them down as
$\Pfone{\hat{t}}$ and $\Pmzero{\hat{t}}$.

We denote as $\pi$ the permutation picked in line $3$ of Algorithm~\ref{alg:RIP_BF}.
Let $\mathcal{P}^*_n$ be the set of all permutations $\pi^* \in \mathcal{P}^*_n$ such that
$$\supp{ \pi^*(\be)\oplus \pi^*(\bar{\be}) }=\{n-\hat{t},n-\hat{t}+1,\cdots,n-1\},\hspace{2mm}\forall \pi^*\in\mathcal{P}_n^*.$$

Let $\pr{\left.\hat{\be} \neq \be\right| \hspace{1mm}\pi\in\mathcal{P}_n}$ be 
the probability that the estimated error vector $\hat{\be}$ at the end of the 
current inner loop iteration is different from $\be$, conditioned by the fact 
that the permutation $\pi$ was applied at the beginning of the outer loop.
Similarly, we define 
 $\pr{\left.\hat{\be} \neq \be\right| \hspace{1mm}\pi^*\in\mathcal{P}^*_n}$.
Note that it can be verified that $\Pf{\hat{t}}\geq \Pf{\hat{t}+1}, 
\Pu{\hat{t}}\geq \Pu{\hat{t}+1},\hspace{1mm}\forall \hat{t}$, as increasing the 
number of current mis-estimated error bits, increases the likelihood of a wrong 
decoder decision.
By leveraging the assumption made in the previous section, we now prove that 
the decoder reaches a correct decoding at the end of the outer loop with the 
least probability each time a $\pi^* \in \mathcal{P}^*_n$ is applied at the
beginning of the outer loop.
\begin{lemma}\label{lem:worst_case} The execution path of the inner loop in 
Algorithm~\ref{alg:RIP_BF} yielding the worst possible decoder success rate is
the one taking place when $\pi^* \in \mathcal{P}^*_n$ is applied at the
beginning of the outer loop, that is:
$$\forall \pi\in\mathcal{P}_n, \forall \pi^*\in\mathcal{P}_n^*, \ \ \pr{\left.\hat{\be} \neq \be\right| \hspace{1mm}\pi\in\mathcal{P}_n}
\leq \pr{\left.\hat{\be} \neq \be\right| \hspace{1mm}\pi^*\in\mathcal{P}^*_n}.$$
\end{lemma}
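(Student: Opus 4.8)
The plan is to set up a coupling between a generic permutation $\pi$ and a worst-case permutation $\pi^*$, and show that the inner-loop process driven by $\pi$ stochastically dominates (in the sense of having a smaller failure probability) the one driven by $\pi^*$. First I would observe that, under Assumption~\ref{ass:Pf_Pu}, the only state information that governs the evolution of the inner loop is the current value of $\hat{t}=\weight{\be\oplus\hat{\be}}$: when the processed position $j$ carries $e_j=1$ and $\hat{e}_j$ still disagrees, a flip (which occurs with probability $\Pf{\hat t}$) decreases $\hat t$ by one; when $e_j=0$ and $\hat e_j$ disagrees, the correct action (keeping the bit, probability $\Pu{\hat t}$) leaves $\hat t$ unchanged while a spurious flip increases it; positions already in agreement behave symmetrically and a flip there increases $\hat t$. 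Hence the inner loop, \emph{given a fixed permutation}, is a time-inhomogeneous process on the value $\hat t$, and the sequence of ``still-wrong'' positions encountered in order is what matters. The key structural fact is that $\mathcal{P}^*_n$ is exactly the set of permutations that schedule all $\hat t$ currently-erroneous positions \emph{last}, i.e. after all the currently-correct ones.

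Next I would argue that deferring an erroneous position is never helpful to the decoder. Concretely, I would compare $\pi$ with the permutation $\pi'$ obtained from $\pi$ by moving one erroneous position later in the ordering (past one correct position), and show $\pr{\hat{\be}\neq\be\mid\pi}\le\pr{\hat{\be}\neq\be\mid\pi'}$; iterating these adjacent ``bubble-sort'' swaps transforms any $\pi$ into some $\pi^*\in\mathcal{P}^*_n$, and since by the remark preceding the lemma all $\pi^*\in\mathcal{P}^*_n$ induce the same failure probability (they differ only by permuting equivalent positions), the chain of inequalities yields the claim. For the single-swap comparison I would couple the two runs so that they agree on every position processed before the swapped pair and use the same randomness elsewhere; on the two swapped slots one uses the monotonicity $\Pf{\hat t}\ge\Pf{\hat t+1}$ and $\Pu{\hat t}\ge\Pu{\hat t+1}$ to check that processing the correct position first (as in $\pi$) leaves the process in a state that is stochastically no worse — in every reachable branch the resulting $\hat t$ is $\le$ the one reached by $\pi'$ — and then invoke the monotonicity of the eventual failure probability in the state $\hat t$ at any fixed point in time (which itself follows by a downstream induction on the remaining positions, again using the same monotonicity of $\Pf\cdot$ and $\Pu\cdot$).

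I expect the main obstacle to be making the ``stochastically no worse'' step fully rigorous: one must show that the map $\hat t\mapsto\pr{\text{failure at end of inner loop}\mid\text{state }\hat t\text{ at slot }k}$ is nondecreasing in $\hat t$ for every slot index $k$, and that the per-slot transition kernels are monotone in the right sense, so that the composition preserves the order. This is a standard monotone-coupling argument, but it requires care because a flip at a currently-correct position and a flip at a currently-wrong position move $\hat t$ in opposite directions; the clean way is to carry the induction from the last processed position backwards, establishing at each slot that the conditional failure probability is monotone in $\hat t$, and then feeding that into the adjacent-swap comparison at the two slots in question. Once this backward induction is in place, the bubble-sort reduction to $\mathcal{P}^*_n$ and the final equality across $\mathcal{P}^*_n$ are routine.
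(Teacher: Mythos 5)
Your argument is correct in outline, but it takes a genuinely different route from the paper's. The paper exploits a structural collapse that you do not use: since each position is visited exactly once per inner-loop pass, a single wrong decision can never be undone within that pass, so $\pr{\hat{\be}=\be\mid\pi}$ equals the probability $\beta(\pi)$ that all $n$ decisions are correct; conditioned on that all-correct trajectory the value of $\hat{t}$ evolves \emph{deterministically} (it stays put on agreeing positions and drops by one on each disagreeing one), so $\beta(\pi)$ is an explicit product of $\Pu{\cdot}$ and $\Pf{\cdot}$ factors indexed by the gaps between consecutive elements of $\supp{\pi(\be)\oplus\pi(\bar{\be})}$. The inequality $\beta(\pi)\geq\beta(\pi^*)$ then follows in a single chain of term-by-term bounds, replacing each $\Pu{\hat{t}-j}$ by the smaller $\Pu{\hat{t}}$ and using $\Pu{0}=1$ --- no coupling, no backward induction, no transposition reduction. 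Your route instead treats the inner loop as a time-inhomogeneous Markov chain on $\hat{t}$ with failure identified with a positive terminal state, and argues by interchange: I checked that your two-slot comparison does hold (the distribution of $\hat{t}$ after processing ``correct then wrong'' stochastically dominates the one after ``wrong then correct'', using exactly the monotonicities $\Pf{\hat{t}}\geq\Pf{\hat{t}+1}$ and $\Pu{\hat{t}}\geq\Pu{\hat{t}+1}$ stated before the lemma), that the per-slot kernels are stochastically monotone, and that the backward induction establishing monotonicity of the conditional failure probability in $\hat{t}$ goes through, so the bubble-sort reduction to $\mathcal{P}^*_n$ is sound. What your approach buys is a strictly stronger conclusion --- stochastic dominance of the entire terminal distribution of $\weight{\hat{\be}\oplus\be}$, not merely of the probability of reaching zero --- and a template that would survive generalizations in which errors are recoverable; what it costs is all the machinery you correctly flag as delicate, which the paper's one-visit-per-position observation renders unnecessary.
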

\begin{proof}
See Appendix~\ref{sec:appendix_c}.
\end{proof}

From now on we will assume that, in each iteration, a permutation from the set 
$\mathcal{P}_n^*$ is picked; in other words, we are assuming that the decoder is
always constrained to reach a decoding success through the worst possible 
execution path.
Let us define the following two sets: $E_1 = S(\be)$, and $E_0 = \{0,\ldots,n-1\} \setminus S(\be)$.
Denote with $\hat{t}_0 = \left|\left\{
S(\be\oplus \bar{\be}) \cap E_0 \right\} \right|$, that is the number of places where the estimated error 
at the beginning of the outer loop iteration $\bar{\be}$ differs from the actual 
$\be$, in positions included in $E_0$. Analogously, define $\hat{t}_1 = \left|\left\{
S(\be\oplus \bar{\be}) \cap E_1 \right\} \right|$.
Furthermore, let
\begin{enumerate}[i)]
\item $\prJ{\omega\xrightarrow{E_0}x}$ denote the probability that the decoder 
in Algorithm \ref{alg:RIP_BF}, starting from a state where $\weight{\hat{\be}\oplus\be} = \omega$, 
and acting in the order specified by a worst case permutation $\pi^*\in\mathcal{P}_n^*$ 
ends in a state with $\hat{t}_0 = x$ after completing the inner loop at lines $4$ -- $11$;
\item $\prJ{\omega\xrightarrow{E_1}x}$ denote the probability that the decoder 
in Algorithm \ref{alg:RIP_BF}, starting from a state where $\weight{\hat{\be}\oplus\be} = \omega$, 
and acting in the order specified by a worst case permutation $\pi^*\in\mathcal{P}_n^*$ 
ends in a state with $\hat{t}_1 = x$ residual errors among the 
bits indexed by $E_1$ after completing the loop at lines $4$--$11$;

\item $\prJ{\omega\xrightarrow[i]{} x}$  as the probability that, 
starting from a state such that
$\weight{\hat{\be}\oplus\be}=\omega$, after $i$ iterations 
the outer loop at lines $2$--$12$ of Algorithm \ref{alg:RIP_BF}, each one 
operating with a worst case permutation, ends in a state where
$\weight{\hat{\be}\oplus\be} = x$.
\end{enumerate}
The expressions of the probabilities \emph{i)} and \emph{ii)} are derived in Appendix~\ref{sec:appendix_a}, 
and only depend on the probabilities $\Pfone{\hat{t}}$ and $\Pmzero{\hat{t}}$. 

We now describe how the aforementioned probabilities can be used to express the worst case \ac{DFR} after $\mathtt{itermax}$ iterations, which we denote as $\mathrm{DFR}^*_{\mathtt{itermax}}$.
First of all, we straightforwardly have
\begin{align}
\prJ{\omega\xrightarrow[1]{}x} & = \sum_{\delta = \max\{0 \hspace{1mm};\hspace{1mm} x-(n-\omega) \}}^t  \prJ{\omega \xrightarrow{E_0}x-\delta}\prJ{\omega\xrightarrow{E_1}\delta}.
\end{align}
We can denote as $\hat{t}^{(i)}=\weight{\be\oplus \hat{\be}^{(\text{\tt iter})}}$, that is: $\hat{t}^{(i)}$ corresponds to the number of residual errors after the $i$-th outer loop iteration.
Then, by considering all possible configurations of such values, and taking into account that the first iteration begins with $t$ residual errors, we have
{\small
\begin{align}
\prJ{t\xrightarrow[\mathtt{itermax}-1]{} \hat{t}^{(\mathtt{itermax}-1)}} =  \sum_{\tleft{0}=0}^{n}\cdots \nonumber & \sum_{\tleft{\mathtt{itermax}-2}=0}^{n}
\prJ{\tleft{\mathtt{i}_{\mathtt{max}-2}}\xrightarrow[1]{} \hat{t}^{(\mathtt{itermax}-1)}}\\
& \prod_{j=0}^{\mathtt{itermax}-2}\prJ{\hat{t}^{(j-1)}\xrightarrow[1]{}\hat{t}^{(j)}},
\end{align}
}
where, to have a consistent notation, we consider $\tleft{-1} = t$.
The above formula is very simple and, essentially, takes into account all possible transitions starting from an initial number of residual errors equal to $t$ and ending in $x$ residual errors.
Taking this probability into account, the \ac{DFR} after $\mathtt{itermax}$ iterations is straightforwardly obtained as
\begin{equation}
\label{eq:dfr_imax}
\mathrm{DFR}^*_{\mathtt{itermax}} = 1 - \sum_{\hat{t}^{(\mathtt{itermax}-1)}=0}^n \prJ{t\xrightarrow[\mathtt{itermax}-1]{} \hat{t}^{(\mathtt{itermax}-1)}}\prJ{\hat{t}^{(\mathtt{itermax}-1)}\xrightarrow[1]{}0}.
\end{equation}

\subsection{Analyzing a Single-iteration Decoder}
For the case of the decoder performing just one iteration, the simple expression 
of the DFR has been derived in the proof of Lemma \ref{lem:worst_case}, that is
\begin{equation}
\label{eq:dfr_1}
\mathrm{DFR}_1^* \nonumber  = 1 - \prJ{t\xrightarrow[1]{} 0}  = \bigg(\Pu{t}\bigg)^{n-t}\prod_{j=1}^t\Pf{j}.
\end{equation}

 Actually, for just one iteration, the average \ac{DFR} (corresponding to the use 
of a random permutation $\pi$) can be approximated in a very simple way, 
as follows.
Let $a_i, a_{i+1}$, with $i\in[0 ; t-2]$, be two consecutive elements of 
$\supp{\pi(\be)}$.
Then denote with $d$ the average zero-run lenght in $\be$, $d= \mathbb{E} 
\left[ a_{i+1}-a_i\right] = \frac{n-t}{t+1},\hspace{2mm}\forall  i \in [0 ; t-2]$
where $\mathbb{E}[\cdot]$ denotes the expected value.
Consequently, we can write
\begin{equation}
\mathrm{DFR}_1 \approx 1- \left(\prod_{j=1}^{t}\bigg(\Pu{j}\bigg)^d \right) \prod_{l=1}^t\Pf{l}.
\end{equation}

\subsection{Simulation Results}
\begin{figure}[!t]
\centering
\subfigure[One Iteration]{\begin{tikzpicture}[scale=0.65]
  \begin{axis}[xtick={0,10,...,90},
               legend columns=2,
               xmin = 10,
               xmax= 90,
               grid = major,
               ymode=log,
               ymin=0.000001,
               ymax=1,
               legend style={at={(0.5,-0.4)},anchor=south},
               mark size=2pt,
               xlabel={$t$}, 
               ylabel={DFR}, 
               ]

\addplot+[blue, densely dotted, no marks,line width=1.2pt] plot coordinates{
(80, 1)
(79, 0.9999999999)
(78, 0.9999999992)
(77, 0.9999999952)
(76, 0.9999999743)
(75, 0.9999998755)
(74, 0.999999457)
(73, 0.9999978583)
(72, 0.9999923323)
(71, 0.9999749891)
(70, 0.9999253919)
(69, 0.9997956797)
(68, 0.9994842925)
(67, 0.998795572)
(66, 0.9973867767)
(65, 0.9947115353)
(64, 0.9899774422)
(63, 0.9821411366)
(62, 0.9699620425)
(61, 0.9521236799)
(60, 0.9274115362)
(59, 0.8949159061)
(58, 0.8542153872)
(57, 0.8054975839)
(56, 0.7495881369)
(55, 0.6878820247)
(54, 0.6221940683)
(53, 0.5545615609)
(52, 0.4870373884)
(51, 0.4215074612)
(50, 0.3595551968)
(49, 0.3023825963)
(48, 0.2507858305)
(47, 0.2051753186)
(46, 0.1656265298)
(45, 0.1319475132)
(44, 0.1037512575)
(43, 0.08052417826)
(42, 0.06168534976)
(41, 0.04663395979)
(40, 0.03478459315)
(39, 0.02559132618)
(38, 0.01856234989)
(37, 0.01326709828)
(36, 0.009337806644)
(35, 0.006467199185)
(34, 0.004403704998)
(33, 0.002945292183)
(32, 0.001932727112)
(31, 0.001242826756)
(30, 0.0007820801679)
(29, 0.000480867465)
(28, 0.0002883948399)
(27, 0.0001683849953)
(26, 0.9550749113e-4)
(25, 0.5249726322e-4)
(24, 0.2788761413e-4)
(23, 0.1427286173e-4)
(22, 0.7012852798e-5)
(21, 0.3294530539e-5)
(20, 0.1472868792e-5)
};
\addlegendentry{Est. $\mathrm{DFR}_1$, };

  \addplot[blue, only marks, mark=x,mark size=2.5pt]
             table [x=t,y expr=\thisrow{rip_1it_failures}/\thisrow{tot_tx},col sep = comma]{rip1_p4801_v45_b35.csv};
\addlegendentry{Sim. $\mathrm{DFR}_1$};

  \addplot[cyan, densely dotted,line width=1.2pt]
             table [x=t,y=rip_1it,col sep = comma]{f_synth_rip_p4801_v45_b35.csv};
  \addlegendentry{Est. $\mathrm{DFR}^*_1$};

\addplot[cyan, only marks, mark=x,mark size=2.5pt]
             table [x=t,y expr=\thisrow{rip_2it_wc_failures}/\thisrow{tot_tx},col sep = comma]{rip1_wc_p4801_v45_b35.csv};
\addlegendentry{Sim. $\mathrm{DFR}^*_1$};

\addplot[green!70!blue,densely dotted,line width=1.2pt]
           table [x=t,y=rip_1it,col sep = comma]{synth_ripes_p4801_v45_b35.csv};
\addlegendentry{Est. bound $\mathrm{DFR}^*_1\phantom{mm}$};

\addplot[black, only marks, mark=*,mark size=1.2pt]
             table [x=t,y expr=\thisrow{ip_1it_failures}/\thisrow{tot_tx},col sep = comma]{ip_p4801_v45_b35.csv};
\addlegendentry{Sim. $\mathrm{DFR}_{I}$};

\end{axis}
\end{tikzpicture}
 }
\subfigure[Two Iterations]{\begin{tikzpicture}[scale=0.65]
  \begin{axis}[xtick={10,20,...,90},
               xmax=90,
               xmin=10,
               legend columns=2,
               grid = major,
               legend style={at={(0.5,-0.4)},anchor=south},
               ymode=log, 
               ymin=0.000001,ymax=1,
               mark size=2pt,
               xlabel={$t$},
               ylabel={DFR}]

  \addplot[cyan,densely dotted, no marks,line width=1.2pt]
             table [x=t,y=rip_1it,col sep = comma]{synth_rip2_p4801_v45_b35.csv};
  \addlegendentry{Est. $\mathrm{DFR}^*_2$};
  
  \addplot[cyan, only marks, mark=x,mark size=2.5pt]
             table [x=t,y expr=\thisrow{rip_2it_wc_failures}/\thisrow{tot_tx},col sep = comma]{rip2_wc_p4801_v45_b35.csv};
\addlegendentry{Sim. $\mathrm{DFR}^*_2$};

  \addplot[green!70!blue,densely dotted, no marks,line width=1.2pt]
             table [x=t,y=rip_1it,col sep = comma]{synth_rip2es_p4801_v45_b35.csv};
  \addlegendentry{Est. bound $\mathrm{DFR}^*_2\phantom{mm}$};

\addplot[blue, only marks, mark=x,mark size=2.5pt]
             table [x=t,y expr=\thisrow{rip_2it_failures}/\thisrow{tot_tx},col sep = comma]{rip_p4801_v45_b35.csv};
\addlegendentry{Sim. $\mathrm{DFR}_2$};

\end{axis}
\end{tikzpicture}
 }
\caption{Experimental validation of the \ac{DFR} estimates (Est.) through numerical simulations (Sim.). The QC-LDPC code parameters are $n_0=2$, $p=4,801$ and $v=45$. The decoding threshold is $b_0=25$.
\label{fig:sim}}
\end{figure}
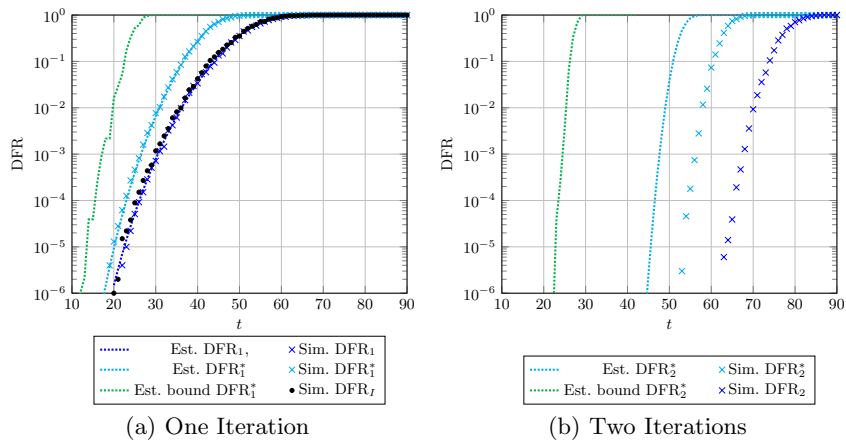
In this section we report the results of an experimental validation of the proposed analysis of the behavior of the RIP decoder.
As a case study we chose a QC-LDPC code having the parity
check matrix $\bH$ formed by $n_0=2$ circulant blocks of size $p=4801$, 
column weight $v=45$ and we assessed the \ac{DFR}
varying the error weight $t$ from $10$ to $100$, attempting to decode $10^6$
error vectors for each value of the error weight.
To this end, we implemented the RIP decoder in \texttt{C99}, and run the experiments
on an Intel Core i$5$-$6500$ CPU running at $3$.$20$ GHz, compiling the code with
the \texttt{GCC} $8$.$3$.$0$ and running the built executables on 
Debian GNU/Linux $10$.$2$ (stable).
The computation of the worst case DFR estimates and bounds in 
Eq.~\eqref{eq:lowerbound1} and Eq.~\eqref{eq:lowerbound2} were realized 
employing the NTL library~\cite{NTL}, while the solver for the counting version of the 
subset sum problem was implemented in plain \texttt{C++}.
Computing the entire DFR upper bound relying on the counting subset sum problem
takes significantly less than a second, for the selected parameters.
We report the results considering a bit flipping threshold of $b=25$, for all
the iterations; however, we obtained analogous results varying the bit flipping
threshold. The results with thresholds different from $25$ are omitted for lack
of space.
Figure~\ref{fig:sim} reports the results of numerical simulations of the \ac{DFR}
of the RIP decoder running for either one or two iterations, while employing a
random permutation ($\mathrm{DFR}_1$ and $\mathrm{DFR}_2$) or artificially 
computing the error estimates according to the worst-case permutation 
($\mathrm{DFR}^*_1$ and $\mathrm{DFR}^*_2$).
As it can be seen, our technique for the \ac{DFR} estimation provides a perfect 
match for the case of a single iteration, while our assumptions turn out to 
provide a conservative estimate for the worst-case \ac{DFR} in the case of a 
$2$-iteration RIP decoder.
In both cases, the actual behavior of the decoder with a random permutation
matches our expectations of having the \ac{DFR} bounded by both the worst-case 
one and the closed form code-specific bound reported in green in Fig.~\ref{fig:sim}.
Finally, it is interesting to note, from an implementation viewpoint, that
skipping the permutation in the case of a single-iteration RIP decoder appears to
have no effect on the simulated DFR (black dots, marked $\mathrm{DFR}_{I}$ in 
Fig.~\ref{fig:sim}). This can be explained observing that the first iteration of
the RIP decoder is actually applying the random permutation to the positions of 
the error estimates which have randomly-placed discrepancies with the actual error itself.
 \section{Conclusions}
We provided a statistical analysis of the behavior of a randomized
in place bit flipping decoder, derived from the canonical one by randomizing the
order in which the estimated error positions are processed.
This modification to the decoder allows us to provide a statistical worst-case
analysis of the DFR of the decoder at hand, both considering the average behavior among all the codes with the same length, dimension and number of parity checks, and a code-specific bound for a given QC-LDPC/QC-MDPC.
The former analysis can be fruitfully exploited to design code parameters 
allowing to obtain DFR values such as the ones needed to employ QC-LDPC/QC-MDPC
codes in constructions providing IND-CCA2 guarantees under the assumption 
that the underlying scheme is $\delta$-correct~\cite{HHK}.
The latter result allows us to analyze a given QC-LDPC/QC-MDPC code to assess whether the \ac{DFR} it exhibits is above the maximum tolerable one for an IND-CCA2 construction, thus allowing us to discard weak keypairs upon generation.
We note that our analysis relies on the RIP decoder performing a finite number of
iterations, as opposed to the one provided in~\cite{Sendrier2019}, in turn 
allowing a constant-time implementation of the RIP decoder itself.
This fact is of significant practical relevance since the timing information leaked from decoders performing a variable number of iterations was shown to be as valuable
as the one leaked by decryption failures to a CCA 
attacker~\cite{Eaton2018,Santini2019_react}, leading to concrete violations of the IND-CCA2 property.

 \bibliographystyle{abbrv}

\appendix
\section{Deriving the Bit-flipping Probabilities for the RIP Decoder}
\label{sec:appendix_a}
Denote with $\hat{t}_0 = \left|\left\{S(\be\oplus \bar{\be}) \cap E_0 \right\} \right|$, 
that is the number of places where the estimated error at the beginning of the 
outer loop iteration $\bar{\be}$ differs from the actual $\be$, in positions 
included in $E_0$. Analogously, define $\hat{t}_1 = \left|\left\{S(\be\oplus \bar{\be}) \cap E_1 \right\} \right|$.

We now characterize the statistical distribution of $\hat{t}_0$ and $\hat{t}_1$ after
$n$ iterations of the inner loop of the RIP-BF decoder are run, processing the
estimated error bit positions in the order pointed out by $\pi^*\in\mathcal{P}_n^*$, i.e., the permutation which places at the end all the positions $j$ where $\hat{e}_j \neq e_j$.
We point out that, at the first iteration of the outer loop of the decoder, 
this coincides with placing all the positions where $e_j=1$ at the end, since 
$\bar{\be}$ is initialized to the $n$-binary elements zero vector, hence $\bar{\be}\oplus \be=\be$.

In characterizing the distribution of $\hat{t}_0$, because of Assumption \ref{ass:Pf_Pu}, we rely only on the probabilities $\Pfzero{t}$
and $\Pmzero{t}$, i.e. the probability that an error estimate bit will be flipped
or maintained.
In the following, for the sake of simplicity, we will consider $\hat{t}_1=t$, which
is the case of the RIP decoder performing the first outer loop iteration.
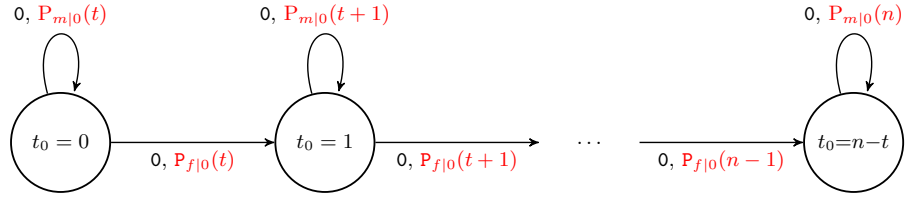
\begin{figure}[t]
    \centering
    \resizebox{\linewidth}{!}{
\begin{tikzpicture}[->, >=stealth', auto, semithick, node distance=4cm]

\tikzstyle{every state}=[fill=white,draw=black,thick,text=black,scale=1, inner sep=0pt, minimum size=1.5cm]
\tikzstyle{exit_state}=[fill=white,draw=myred,thick,text=myred,scale=1]
\node[state]    (A)                     {$t_0 = 0$};

\node[state]    (B)[right of=A]   {$t_0 = 1$};
\node[minimum size=1.5cm]         (C)[right of=B]   {$\ldots$};
\node[state]    (D)[right of=C]   {$t_0$$=$$n$$-$$t$};

\path (A)    edge  node[below]{\texttt{0}, \textcolor{red}{$\Pfzero{t}$}}     (B);
\path (B)    edge  node[below]{\texttt{0}, \textcolor{red}{$\Pfzero{t+1}$}}     (C);
\path (C)    edge  node[below]{\texttt{0}, \textcolor{red}{$\Pfzero{n-1}$}}     (D);

\path (A)    edge [loop above] node{\texttt{0}, \textcolor{red}{$\Pmzero{t}$}}     (A);
\path (B)    edge [loop above] node{\texttt{0}, \textcolor{red}{$\Pmzero{t+1}$}}     (B);
\path (D)    edge [loop above] node{\texttt{0}, \textcolor{red}{$\Pmzero{n}$}}     (D);
\end{tikzpicture}
 }
    \caption{Structure of the probabilistic FSA modeling the evolution of the 
    distribution of the $\hat{t}_0$ variable. Read characters are reported in black, 
    transition probabilities in red.}
    \label{fig:t0_evolution}
\end{figure}
To model the statistical distribution of $\hat{t}_0$ we employ the framework of Probabilistic
Finite State Automata (PFSA)~\cite{SALOMAA196971}.
Informally, a PFSA is a Finite State Automaton (FSA) characterized by transition 
probabilities for each of the transitions of the FSA.
The state of a PFSA is a discrete probability distribution over the set of 
FSA states and the probabilities of the transitions starting
from the same FSA state, reading a the same symbol, must add up to one.

We model the statistical distribution of $\hat{t}_0$ as the state of a PFSA having 
$n-t$ FSA states, each one mapped onto a specific value for $\hat{t}_0$, as depicted in 
Figure~\ref{fig:t0_evolution}.
We consider the underlying FSA to be accepting the input language constituted
by binary strings obtained as the sequences of $\hat{e}_j \neq e_j$ values, 
where $j$ is the error estimate position being processed by the RIP decoder
at a given inner loop iteration.
We therefore have that, for the PFSA modeling the evolution of $\hat{t}_0$ 
while the RIP decoder acts on the first $n-t$ positions specified by $\pi^*$, 
all the read bits will be equal to $0$, as $\pi^*$ sorts the positions of 
$\hat{\be}$ so that the ($n-t$ at the first iteration) positions with no 
discrepancy between $\bar{\be}$ and $\be$ come first.

The transition probability for the PFSA transition from a state 
$\hat{t}_0=i$ to $\hat{t}_0=i+1$ requires the RIP decoder to
flip a bit of $\hat{\be}$ equal to zero, and matching the one in the same position
of $\be$, causing a discrepancy. Because of Assumption \ref{ass:Pf_Pu}, the probability of such a transition is 
$\Pfzero{t+i}$. 
, while the probability of the self-loop transition from $\hat{t}_0=i$ to $\hat{t}_0=i$ itself
is $\Pmzero{t+i}$.

Note that, during the inner loop iterations of the RIP decoder acting on positions
of $\hat{\be}$ which have no discrepancies it is not possible to decrease the
value $\hat{t}_0$, as no reduction on the number of discrepancies between $\hat{\be}$
and $\be$ can be done changing values of $\hat{\be}$ which are already equal
to the ones in $\be$. Hence, we have that the probability of transitioning 
from $\hat{t}_0=i$ to $\hat{t}_0=i-1$ is zero.

The evolution of a PFSA can be computed simply taking the current state, 
represented as the vector $\val$ of probabilities for each FSA state and multiplying
it by an appropriate matrix which characterizes the transitions in the PFSA.
Such a matrix is derived as the adjacency matrix of the PFSA graph representation, 
keeping only the edges for which the read character matches the edge label, 
and substituting the one-values in the adjacency matrix with the probability
labelling the corresponding edge.
\begin{figure}[t]
    \centering
    \resizebox{\linewidth}{!}{
\begin{tikzpicture}[->, >=stealth', auto, semithick, node distance=4cm]

\tikzstyle{every state}=[fill=white,draw=black,thick,text=black,scale=1, inner sep=0pt, minimum size=1.5cm]
\tikzstyle{exit_state}=[fill=white,draw=myred,thick,text=myred,scale=1]
\node[state]    (A)                     {$t_1 = 0$};

\node[state]    (B)[right of=A]   {$t_1 = 1$};
\node[minimum size=1.5cm]         (C)[right of=B]   {$\ldots$};
\node[state]    (D)[right of=C]   {$t_1 = t$};

\path (B)   edge  node[below]{\texttt{1}, \textcolor{red}{$\Pfone{t^*-t+1}$}}     (A);
\path (C)   edge  node[below]{\texttt{1}, \textcolor{red}{$\Pfone{t^*-t+2}$}}   (B);
\path (D)   edge  node[below]{\texttt{1}, \textcolor{red}{$\Pfone{t^*}$}} (C);

\path (A)    edge [loop above] node{\texttt{1}, \textcolor{red}{$\Pmone{t^*-t}$}}     (A);
\path (B)    edge [loop above] node{\texttt{1}, \textcolor{red}{$\Pmone{t^*-t+1}$}}     (B);
\path (D)    edge [loop above] node{\texttt{1}, \textcolor{red}{$\Pmone{t^*}$}}     (D);
\end{tikzpicture}
 }
    \caption{Structure of the probabilistic FSA modeling the evolution of the 
    distribution of the $\hat{t}_1$ variable. Read characters are reported in black, 
    transition probabilites in red-}
    \label{fig:t1_evolution}
\end{figure}
We obtain the transition matrix modeling an iteration of the RIP decoder
acting on an $\hat{e}_j=e_j$ (i.e. reading a \texttt{0}) as the $(n-t+1)\times(n-t+1)$ matrix:
\begin{equation*}
    \bK_0 = \begin{bmatrix}
    \Pu{t}   & \cPu{t}  &  0        & 0       & 0        & 0\\
    0        & \Pu{t+1} & \cPu{t+1} & 0       & 0        & 0\\
    \vdots   & \vdots   & \vdots    & \vdots  & \vdots   & \vdots\\
    0        &       0  &       0   &      0  & \Pu{n-1} & \cPu{n-1} \\
    0        &       0  &       0   &      0  & 0        & \Pu{n}
    \end{bmatrix}
\end{equation*}

Since we want to compute the effect on the distribution of $\hat{t}_0$ after $n-t$ 
iterations of the RIP decoder acting on positions $j$ such that $\hat{e}_j=e_j$, 
we can obtain it simply as $\val\bK_0^{n-t}$. Note that the subsequent $t$
iterations of the RIP decoder will not alter the value of $\hat{t}_0$ as they act
on positions $j$ such that $e_j=1$.
Since we know that, at the beginning of the first iteration 
$\val=[\pr{\hat{t}_0=0}=1,\pr{\hat{t}_0=1}=0,\pr{\hat{t}_0=2}=0,\cdots,\pr{\hat{t}_0=n-t}=0]$, we are
able to compute $\prJ{\omega\xrightarrow{E_0}x}$ as the $(x+1)$-th element
of $\val\bK_0^{n-t}$.

We now model the distribution of $\hat{t}_1$, during the last $t$ iterations
of the inner loop of the RIP decoder performed during an iteration of the outer
loop. Note that, to this end, the first $n-t$ iterations of the inner loop
have no effect on $\hat{t}_1$. Denote with $t^*$ the incorrectly estimated bits 
$w_H(\be + \hat{\be})$ at the beginning of the inner loop iterations acting
on positions $j$ where $\hat{e}_j\neq e_j$. Note that, at the first iteration
of the outer loop of the RIP decoder, $t^*=\hat{t}_0+t$, when the RIP decoder
is about to analyze the first position for which $w_H(\be + \bar{\be})$.
Arguments analogous to the ones employed to model the PFSA describing
the evolution for $\hat{t}_0$ allow us to obtain the one modeling the evolution
for $\hat{t}_1$, reported in Figure~\ref{fig:t1_evolution}.

We are thus able to obtain the $\prJ{\omega\xrightarrow{E_1}x}$ PFSA reported in 
Figure~\ref{fig:t1_evolution} for $\hat{t}_1$ is 
$\textbf{z}=[\pr{\hat{t}_1=0}=0,\pr{\hat{t}_1=0}=0,\ldots,\pr{\hat{t}_1=t}=1]$ and employing the 
$(t+1)\times(t+1)$ transition matrix $\bK_1$ of the PFSA to compute $\textbf{z}\bK_1^t$.
The value of $\prJ{\omega\xrightarrow{E_1}x}$ corresponds to the $(x+1)$-th element
of $\textbf{z}\bK_1^t$.
 \section{Solving the Counting Subset Sum Problem}
\label{sec:appendix_b}
\begin{algorithm}[!t]
{ \scriptsize
\LinesNumbered
\DontPrintSemicolon
\caption{Computation of $\mathcal{N}_i(\val, \eta, \mathtt{thr})$ \label{alg:error_sets}}
\KwIn{ $\val$: an integer sequence, with elements in $\{0,\ldots v\}$, $|\val|=n$.
The collection admits repeated items\newline
$\eta$: the number of elements of the sought subsets of $\val$ \newline
$\mathtt{thr}$: the maximum allowed value of the sum of the $\eta$-wide integer subsets of $\val$ \newline
$i$: the number of distinct elements admitted in the subsets}
\KwOut{$\mathcal{N}_i(\val, \eta, \mathtt{thr})$: the number of subsets of $\val$, of $\eta$ integers picked 
with sum $\leq \mathtt{thr}$}
\KwData{$z$: the number of distinct elements in $\val$\newline
$\epsilon_i$: the $i$-th distinct integer in $\val$, $i \in \{0,\ldots,z-1\}$, $i < j \Rightarrow \epsilon_i < \epsilon_j$ \newline
$\lambda_i$: the number of occurrences (multiplicity) of $\epsilon_i$ in $\val$ \newline
}
$\mathtt{sum} \leftarrow 0$\;
 \If{$i = 1$}{
  \For{$j\leftarrow 0$ \KwTo $z-1$}{
     \tcp{Pick $\eta$ terms equal to $\epsilon_j$: their sum should be $\leq \mathtt{thr}$}
     \If{$(\epsilon_j \cdot \eta  \leq \mathtt{thr})\wedge (\lambda_j \geq \eta)$}{
          $\mathtt{sum} \leftarrow \mathtt{sum} + \binom{\lambda_i}{\eta}$\;
     }
  }
  \KwRet $\mathtt{sum}$\;
 } \Else {
  \For{$j \leftarrow 0$ \KwTo $z-1$}{
$m \leftarrow \min\{\lambda_j, \lfloor\frac{\mathtt{thr}}{\epsilon_j}\rfloor,\eta-(i-1) \}$\; 
    \tcp{$i-1$ \emph{distinct} terms must still be placed: place at most $\eta-(i-1)$}
    \For{$k \leftarrow 1$ to $m$}{
        $\mathtt{sum}\leftarrow \mathtt{sum} + \binom{\lambda_j }{k} \mathcal{N}_{(i-1)}\left(\val \setminus \{\epsilon_0 \ldots \epsilon_j\} , \eta-k, \mathtt{thr}-(k\cdot \epsilon_j)\right)$
    }
  }
 }
 \KwRet $\mathtt{sum}$ \;
}
\end{algorithm}

In the following, we describe the algorithm computing $\mathcal{N}(\val, \eta, \mathtt{thr})$, 
i.e., the number of subsets of the elements of $\val$, which have cardinality equal
to $\eta$, and which have the sum of their elements lesser than or equal to $\mathtt{thr}$.

In doing this, we leverage the fact that $\val$ has only a small number of 
distinct elements, $z \ll n=|\val|$. To this end, we represent $\val$ as the 
sequence of its $z$ distinct elements $[\epsilon_0, \epsilon_1,\ldots,
\epsilon_{z-1}]$ in increasing order of their value, i.e.,  $\forall i < j, e_i < e_j$.
Such a sequence is paired with the  sequence of the number of times that 
each $\epsilon_i$ appears in $\val$, $[\lambda_0, \lambda_1,\ldots,\lambda_{z-1}]$.

First of all, we note that the sets which are counted in 
$\mathcal{N}(\val, \eta, \mathtt{thr})$, can be partitioned according 
to the number of distinct elements contained in them. Denote with 
$\mathcal{N}_i(\val, \eta, \mathtt{thr})$ the number of 
the number of subsets of the elements of $\val$, with cardinality equal to $\eta$, 
sum lesser  or equal to $\mathtt{thr}$, and exactly $i$ distinct elements.
The value of $\mathcal{N}(\val, \eta, \mathtt{thr})$ is obtained as the sum 
over all  $i\in{1,\ldots,z}$ of the values of 
$\mathcal{N}_i(\val, \eta, \mathtt{thr})$. The computation
of $\mathcal{N}_i(\val, \eta, \mathtt{thr})$ is described in Algorithm~\ref{alg:error_sets}.

 \section{Proof of Lemma 1}
\label{sec:appendix_d}
\begin{lemma}
From Assumption~\ref{ass:row_probability}, the probabilities that the $i$-th bit of the syndrome $(0\leq i\leq r-1)$ is asserted knowing that the $z$-th bit of the error vector $(0\leq z\leq n-1)$ is null or not, i.e., $\pr{s_i = 1 | e_z} = \pr{\langle \bh_{i,:},\be \rangle = 1 | e_z}$, $\langle \bh_{i,:},\be \rangle  = \bigoplus_{j=0}^{n-1} h_{i,j} \cdot e_j$, can be expressed for each bit position $z$, $0\leq z \leq n-1$, of the error vector as follows:
$$
\pErrZeroUnsat=\pr{\langle \bh_{i,:},\be \rangle = 1\ |\ e_z=0} = \frac{\sum_{l=0,\text{ l odd}}^{\min\{w,t\}}\binom{w}{l}\binom{n-w}{t-l}}{\binom{n-1}{t}}
$$
$$
\pErrOneUnsat=\pr{\langle \bh_{i,:},\be \rangle = 1\ |\ e_z=1} = \frac{\sum_{l=0,\text{ l even}}^{\min\{w-1,t-1\}}\binom{w-1}{l}\binom{n-w}{t-1-l}}{\binom{n-1}{t-1}}
$$
Consequentially, the probability that Algorithm~\ref{alg:RIP_BF} performs a bit-flip of an element of the estimated error vector, $\hat{e}_z$, when the corresponding bit of the actual error vector is asserted, $e_z=1$, i.e., $\mathrm{P}_{f\mid1}$, and the  probability that Algorithm~\ref{alg:RIP_BF} maintains the value of the estimated error vector, $\hat{e}_z$, when the corresponding bit of the actual error vector is null, $e_z=0$, i.e., $\mathrm{P}_{m\mid0}$, are:
$$\mathrm{P}_{f\mid1}= \sum_{\upc = b}^{v}\binom{v}{\upc}\pic^{\upc}(1-\pic)^{v-\upc},$$
$$\mathrm{P}_{m\mid0} = \sum_{\upc = 0}^{b-1}\binom{v}{\upc}\pci^{\upc}(1-\pci)^{v-\upc}.$$
\end{lemma}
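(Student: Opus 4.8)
\emph{Proof proposal.} The plan is to prove the statement in two stages: first I would derive the per–parity-check probabilities $\pErrZeroUnsat$ and $\pErrOneUnsat$ by a direct counting argument under Assumption~\ref{ass:row_probability}, and then obtain $\mathrm{P}_{f\mid1}$ and $\mathrm{P}_{m\mid0}$ by reading off a binomial tail, using the fact that each column of $\bH$ meets exactly $v$ rows.

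For the first stage, the key observation is that $\langle \bh_{i,:},\be\rangle=\bigoplus_{j=0}^{n-1}h_{i,j}e_j$ is the parity of $\card{\supp{\bh_{i,:}}\cap\supp{\be}}$, so $s_i=1$ precisely when that intersection has odd size. Since we care about syndrome bits touching position $z$, I would fix a row $i$ with $h_{i,z}=1$ and peel off that coordinate: with $\ell:=\card{(\supp{\bh_{i,:}}\setminus\{z\})\cap(\supp{\be}\setminus\{z\})}$ one has $\card{\supp{\bh_{i,:}}\cap\supp{\be}}=\ell$ when $e_z=0$ and $=\ell+1$ when $e_z=1$, so $s_i=1\iff\ell$ odd in the first case and $s_i=1\iff\ell$ even in the second. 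By Assumption~\ref{ass:row_probability}, $\supp{\bh_{i,:}}\setminus\{z\}$ is a uniform $(w-1)$-subset of the $n-1$ positions different from $z$, while $\supp{\be}\setminus\{z\}$ is fixed of size $t$ (if $e_z=0$) or $t-1$ (if $e_z=1$); hence $\ell$ is hypergeometric, $\pr{\ell=l}=\binom{w-1}{l}\binom{n-w}{\tau-l}/\binom{n-1}{\tau}$ with $\tau=t$ or $t-1$, the index running up to $\min\{w-1,\tau\}$ (equivalently $\min\{w,\tau\}$, since the extra term vanishes). Summing over $l$ of the appropriate parity gives the displayed formulas for $\pci=\pErrZeroUnsat$ and $\pic=\pErrOneUnsat$ (the leading $\binom{w}{l}$ appearing in the statement of $\pErrZeroUnsat$ being the same up to the immaterial $w\!\to\!w-1$ rounding).

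For the second stage, I would recall that $\bH$ is $(v,w)$-regular, so the column $\bh_{:,z}$ has weight $v$; equivalently, exactly $v$ parity checks involve $\hat{e}_z$, and the counter $\upc$ accumulated in lines~6--7 of Algorithm~\ref{alg:RIP_BF} is exactly the number of those $v$ checks currently unsatisfied. Appealing once more to Assumption~\ref{ass:row_probability} to treat those $v$ rows as mutually independent, each such check is unsatisfied independently with probability $\pic$ when $e_z=1$ and with probability $\pci$ when $e_z=0$, so $\upc\sim\mathrm{Bin}(v,\pic)$ or $\upc\sim\mathrm{Bin}(v,\pci)$. Algorithm~\ref{alg:RIP_BF} flips $\hat{e}_z$ exactly when $\upc\geq b$, whence $\mathrm{P}_{f\mid1}=\pr{\upc\geq b\mid e_z=1}=\sum_{\upc=b}^{v}\binom{v}{\upc}\pic^{\upc}(1-\pic)^{v-\upc}$, and it leaves $\hat{e}_z$ unchanged exactly when $\upc<b$, giving $\mathrm{P}_{m\mid0}=\pr{\upc<b\mid e_z=0}=\sum_{\upc=0}^{b-1}\binom{v}{\upc}\pci^{\upc}(1-\pci)^{v-\upc}$.

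I expect the genuinely delicate point to be the two appeals to Assumption~\ref{ass:row_probability}: for a fixed quasi-cyclic $\bH$ a syndrome bit is not literally a parity over a uniformly random weight-$w$ support, and the $v$ rows incident to a column are not mutually independent, so these are the steps where the idealized model is spent rather than a combinatorial difficulty to be surmounted. The only other thing needing care is the off-by-one bookkeeping forced by conditioning on $h_{i,z}=1$ (the shifts $w\!\mapsto\!w-1$, $t\!\mapsto\!t-1$, $n\!\mapsto\!n-1$ and the truncation of the summation ranges), which must be tracked precisely to land exactly on the stated closed forms.
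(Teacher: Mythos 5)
Your proposal is correct and follows essentially the same route as the paper's proof in Appendix~\ref{sec:appendix_d}: determine when the parity check is unsatisfied via the parity of the support intersection after peeling off position $z$, count the admissible error configurations hypergeometrically under Assumption~\ref{ass:row_probability}, and then obtain $\mathrm{P}_{f\mid1}$ and $\mathrm{P}_{m\mid0}$ as binomial tails by treating the $v$ checks incident to the column as i.i.d.\ Bernoulli trials. Your careful bookkeeping for the $e_z=0$ case (yielding $\binom{w-1}{l}$ over $\binom{n-1}{t}$) is in fact the internally consistent version of the stated $\pErrZeroUnsat$ formula, which as printed does not normalize to one over all $l$; the paper's own proof only works the $e_z=1$ case explicitly and leaves the other ``to similar arguments.''
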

\begin{proof}
For the sake of brevity, we consider the case of $e_z=1$ deriving the expression of $\mathrm{P}_{f\mid1}$; the proof for $\mathrm{P}_{m\mid0}$ can be carried out with similar arguments.
Given a row $\bh_{i,:}$ of the parity-check matrix $\bH$, such that $z \in \supp{\bh_{i,:}}$, the equation $\bigoplus_{j=0}^{n-1} h_{i,j} \cdot e_j$ (in the unknown $\be$) yields a non-null value for the $i$-th bit of the syndrome, $s_i$, (i.e., the eq. is unsatisfied) if and only if the support of the error vector $\be$ is such that $\bigoplus_{j=0}^{n-1} h_{i,j} \cdot e_j= 2a+1, a\geq 0$, including the term having $j=z$, i.e., $\bh_{i,z} \cdot e_z=1$. 
This implies that the cardinality of the set obtained intersecting the support $\bh_{i,:}$ with the one of $\be$, $\card{(\supp{\bh_{i,:}}\setminus \{z\})\cap(\supp{\be}\setminus \{z\})}$, must be an even number, which in turn cannot be larger than the minimum between $\card{\supp{\bh_{i,:}}\setminus \{i\}} = w-1$ and $\card{\supp{\be}\setminus \{i\}} = t-1$. 

The probability $\pic$ is obtained considering the fraction of the number of error vector values having an even number of asserted bits matching the asserted bits ones in a row 
of $\bH$ (noting that, for the $z$-th bit position, both the error and the row of $\bH$ are set) on the number of error vector values having $t-1$ asserted bits over $n-1$ positions, i.e., $\binom{n-1}{t-1}$. 
The numerator of the said fraction is easily computed as the sum of all error vector configurations having an even number $0 \leq l \leq \mathrm{min}\{w-1,t-1\}$ of asserted bits. Considering a given value for $l$, the counting of the error vector values is derived as follows. 
Picking one of vector with $l$ asserted bits over $w$ possible positions, i.e., one vector over $\binom{w-1}{l}$ possible ones, there are $\binom{n-w}{t-1-l}$ possible values of the error vector exhibiting $t-1-l$ null bits in the remaining $n-w$ positions; therefore, the total number of vectors with weigh $l$ is $\binom{w-1}{l} \cdot \binom{n-w}{t-1-l}$. Repeating the same line of reasoning for each value of $l$ allows to derive the numerator of the formula defining $\pic$.   

From Assumption~\ref{ass:row_probability}, the value of any row $\bh_{i,:}$ is modeled as a random variable with a Bernoulli distribution having parameter (or expected value) $\pic$, and each of these random variables is independent from the others. 
Consequentially, the probability that Algorithm~\ref{alg:RIP_BF} performs a bit-flip of an element of the estimated error vector when the corresponding bit of the actual error vector is asserted and the counter of the unsatisfied parity checks (upc) is above or equal to a given threshold $b$, is derived as the binomial probability obtained adding the outcomes of $v$ (column-weight of $\bH$) i.i.d. Bernoulli trials.
\qed
\end{proof}
 \section{Proof of Lemma 2}
\label{sec:appendix_c}
\textbf{Lemma 2.} The execution path of the inner loop in 
Algorithm~\ref{alg:RIP_BF} yielding the worst possible decoder success rate is
the one taking place when $\pi^* \in \mathcal{P}^*_n$ is applied at the
beginning of the outer loop, that is:
$$\forall \pi\in\mathcal{P}_n, \forall \pi^*\in\mathcal{P}_n^*, \ \ \pr{\left.\hat{\be} \neq \be\right| \hspace{1mm}\pi\in\mathcal{P}_n}
\leq \pr{\left.\hat{\be} \neq \be\right| \hspace{1mm}\pi^*\in\mathcal{P}^*_n}.$$
\begin{proof}
First of all, we can write $\pr{\left.\be'\neq \be\right| \hspace{1mm}\pi\in\mathcal{P}_n} = 1-\beta(\pi),$
where $\beta(\pi)$ is the probability that all bits, evaluated in the order specified by $\pi$, are correctly processed. 
To visualize the effect of a permutation $\pi^*\in\mathcal{P}_n$, we can consider the following representation
\begin{equation*}
\pi^*(\be)\oplus \pi^*(\bar{\be})  = [\underbrace{0,0,\cdots,0}_{\text{length $n-\hat{t}$}},\underbrace{1,1,\cdots,1}_{\text{length $\hat{t}$}}], \hspace{2mm}\forall \pi^*\in\mathcal{P}^*_n.
\end{equation*}
The decoder will hence analyze first a run of $n-\hat{t}$ positions where the 
differences between the permuted error $\pi^*(\be)$ vector and $\pi^*(\bar{\be})$ 
contain only zeroes, followed by a run of  $\hat{t}$ positions containing only ones.
Thus, we have that
$$\beta(\pi^*) =   \left(\Pmzero{\hat{t}}\right)^{n-\hat{t}}\cdot \Pfone{\hat{t}}
\cdot \Pfone{\hat{t}-1} \cdots \Pfone{1}$$
The former expression can be derived thanks to Assumption \ref{ass:Pf_Pu} as follows.
Note that, the first elements in the first $n-\hat{t}$ positions of $\pi^*(\hat{\be})$ 
and $\pi^*(\be)$ match, therefore the decoder makes a correct evaluation if
it does not change the value of $\pi^*(\hat{\be})$. This in turn implies that, 
in case a sequence of $n-\hat{t}$ correct decisions are made in the corresponding
iterations of the inner loop, each iteration will have the same probability 
$\Pmzero{\hat{t}}$ correctly evaluating the current estimated error bit.
This leads to a probability of performing the first $n-\hat{t}$ iterations
taking a correct decision equal to $\left(\Pmzero{\hat{t}}\right)^{n-\hat{t}}$
Through an analogous line of reasoning, observe that the decoder will need to 
change the value of the current estimated error bit during the last $\hat{t}$
iterations of the inner loop. As a consequence, if all correct decisions are made, 
the number of residual errors will decrease by one at each inner loop iteration, yielding
the remaining part of the expression.

Consider now a generic permutation $\pi$, such that the resulting $\pi(\be)$ 
has support $\{u_0,\cdots,u_{\hat{t}-1}\}$; we have
{\small
\begin{align*}
\beta(\pi) & \nonumber = \left[\Pu{\hat{t}}\right]^{u_0} \Pf{t} \left[\Pu{\hat{t}-1}\right]^{u_1-u_0-1}\Pf{\hat{t}-1}\cdots \Pf{1} \left[\Pu{0}\right]^{n-1 - u_{\hat{t}-1}}\\
& = \left[\Pu{t}\right]^{u_0}\left[\Pu{0}\right]^{n-1 - u_{\hat{t}-1}}\prod_{j = 1}^{\hat{t}-1}\left[\Pu{\hat{t}-j}\right]^{u_{j}-u_{j-1}-1}\prod_{l = 0}^{\hat{t}-1}\Pf{\hat{t}-l}.
\end{align*}
}
We now show that we always have $\beta(\pi)\geq \beta(\pi^*)$.
Indeed, since $\Pu{0}=1$ and due to the monotonic trends of $\mathrm{P}_u$ and $\mathrm{P}_f$, the following chain of inequalities can be derived
{\small
\begin{align*}
\beta(\pi)&\nonumber =\left[\Pu{0}\right]^{n-1 - u_{\hat{t}-1}}\left[\Pu{\hat{t}}\right]^{u_0}\prod_{j = 1}^{\hat{t}-1}\left[\Pu{\hat{t}-j}\right]^{u_{j}-u_{j-1}-1}\prod_{l = 0}^{\hat{t}-1}\Pf{\hat{t}-l}\\& \geq \left[\Pu{0}\right]^{n-1 - u_{\hat{t}-1}}\left[\Pu{\hat{t}}\right]^{u_0}\prod_{j = 1}^{\hat{t}-1}\left[\Pu{\hat{t}}\right]^{u_{j}-u_{j-1}-1}\prod_{l = 0}^{\hat{t}-1}\Pf{\hat{t}-l}\\
& = \left[\Pu{0}\right]^{n-1 - u_{\hat{t}-1}}\left[\Pu{\hat{t}}\right]^{u_0}\left[\Pu{\hat{t}}\right]^{u_{\hat{t}-1}-u_{0}-(\hat{t}-1)}\prod_{l = 0}^{\hat{t}-1}\Pf{\hat{t}-l})\\
& = \left[\Pu{0}\right]^{n-1 - u_{\hat{t}-1}}\left[\Pu{\hat{t}}\right]^{u_{\hat{t}-1}-(\hat{t}-1)}\prod_{l = 0}^{\hat{t}-1}\Pf{\hat{t}-l}\\
& \geq \left[\Pu{\hat{t}}\right]^{n-1 - u_{\hat{t}-1}}
\left[\Pu{\hat{t}}\right]^{u_{\hat{t}-1}-(\hat{t}-1)}\prod_{l = 0}^{\hat{t}-1}\Pf{\hat{t}-l}\\
& = \left[\Pu{\hat{t}}\right]^{n-\hat{t}}\prod_{l = 0}^{\hat{t}-1}\Pf{\hat{t}-l} = \beta(\pi^*).
\end{align*}}\qed
\end{proof}
 
\end{document}